\newcommand{\be}{\begin{equation}}\newcommand{\ee}{\end{equation}}\newcommand{\ba}{\begin{eqnarray}}\newcommand{\ea}{\end{eqnarray}}\newcommand{\ban}{\begin{eqnarray*}}\newcommand{\ean}{\end{eqnarray*}}
\newtheorem{theorem}{Theorem} %
\newtheorem{lemma}{Lemma}
\theoremstyle{definition}
\theoremstyle{remark}
\def\mE{\mathcal{E}}
\def\mL{\mathcal{L}}
\def\mS{\mathcal{S}}
\def\openone{\mathds{1}}
\newcommand{\map}[1]{\mathcal{#1}}
\renewcommand{\set}[1]{\mathcal{#1}}
\newcommand{\ketbra}[2]{|#1\rangle\!\!\!\; \langle #2 |}
\newcommand{\tr}[1]{\operatorname{Tr}\!\left[#1\right]}
\newcommand{\HS}{{\mathcal{H}}}
\newcommand{\Tr}{\mathrm{Tr}}
\newcommand{\mR}{\mathcal{R}}
\begin{document}

\title{Quantum Bayes' rule and Petz transpose map from the minimum change principle}

\author{Ge Bai}
\affiliation{Centre for Quantum Technologies, National University of Singapore, 3 Science Drive 2, Singapore 117543}
\affiliation{Thrust of Artificial Intelligence, Information Hub, The Hong Kong University of Science and Technology (Guangzhou), Guangzhou 511453, China}
\email{gebai@hkust-gz.edu.cn}

\author{Francesco Buscemi}%
\affiliation{Department of Mathematical Informatics, Nagoya University, Furo-cho Chikusa-ku, Nagoya 464-8601, Japan}
\email{buscemi@nagoya-u.jp}

\author{Valerio Scarani}%
\affiliation{Centre for Quantum Technologies, National University of Singapore, 3 Science Drive 2, Singapore 117543}
\affiliation{Department of Physics, National University of Singapore, 2 Science Drive 3, Singapore 117542}
\email{physv@nus.edu.sg}

\begin{abstract}
    Bayes' rule, which is routinely used to update beliefs based on new evidence, can be derived from a principle of minimum change. This principle states that updated beliefs must be consistent with new data, while deviating minimally from the prior belief. Here, we introduce a quantum analog of the minimum change principle and use it to derive a quantum Bayes' rule by minimizing the change between two quantum input-output processes, not just their marginals. This is analogous to the classical case, where Bayes' rule is obtained by minimizing several distances between the joint input-output distributions. When the change maximizes the fidelity, the quantum minimum change principle has a unique solution, and the resulting quantum Bayes' rule recovers the Petz transpose map in many cases.
\end{abstract}
\maketitle

{\em Introduction.}---Usually demonstrated by simple counting arguments involving urns and balls, Bayes' rule has actually been argued to play a much deeper role in probability theory and logic, as the \textit{only consistent system} for updating one's beliefs in light of new evidence~\cite{jeffreys1998theory, polya1954induction, cox1961algebra,jeffrey1965logic, definetti1974theory, jaynes2003probability}. As an alternative to the above axiomatic approach, Bayes' rule can also be derived from a variational argument: the updated belief should be consistent with the new observations while deviating as little as possible from the initial belief. This is known as the \textit{minimum change principle}~\cite{aitchison1975goodness,may1976toward,williams1980bayesian,zellner1988optimal}. %
It formalizes the intuition that the new information should be incorporated into the agent's knowledge in the ``least committal'' way, e.g. without introducing biases unwarranted by the data. Such fundamental insights can be seen as at least a motivation, if not an explanation, for the extraordinary effectiveness of Bayesian statistical inference in virtually all areas of knowledge.

If one considers quantum theory as a noncommutative extension of probability theory, one would expect that there would also be a sound analog of Bayes' rule. However, the status of a \textit{quantum Bayes' rule} is still much debated, with many alternatives, often inequivalent to each other, having been proposed in the past decades~\cite{caves1986quantum,korotkov1999continuous,korotkov2001selective,gardiner2004quantum,luders2006concerning,ozawa1997quantum,fuchs2001quantum,schack2001quantum,warmuth2005bayes,leifer2013towards,surace2022state,tsang2022generalized,parzygnat2022non,parzygnat2023time,parzygnat2023axioms}. Among these proposals, the \textit{Petz transpose map}~\cite{petz1,petz} stands out as the only quantum Bayes' rule that satisfies a set of axioms analogous to the classical Bayes' rule~\cite{parzygnat2023axioms}.

Attempts have also been made to derive a quantum Bayes' rule from optimizations involving the ``posterior'' state. For instance, Ref.~\cite{warmuth2005bayes} minimizes a loss function related to the quantum relative entropy, while Refs.~\cite{tsang2023operational,tsang2022generalized} optimize an upper bound on a distance measure between two estimators of the initial and final states. However, these methods, while involving optimization, focus primarily on the marginals of the process rather than on the entire process. As a result, they do not fully reflect the minimality of the change \textit{over the entire process}, which is instead the core argument from which Bayes' rule and its generalizations, such as Jeffrey's theory of probability kinematics, emerge.

Thus, the current situation is that the analog of Bayes' rule in quantum theory is not yet settled, despite the importance that such a concept would have not only for the foundations of the theory, but also for its applications. %

In this work, we take a decisive step towards solving this problem by presenting a new approach to the quantum Bayes' rule, based on a natural quantum analog of the minimum change principle, which involves the entire process, not just its marginals (see Fig.~\ref{fig:useless} for a schematic representation). Specifically, when the change maximizes the quantum fidelity~\cite{jozsa1994fidelity,uhlmann1976transition}, the resulting quantum Bayes' rule can be derived analytically and corresponds to the Petz transpose map in many cases. This connection further strengthens the link between Bayes' rule, the minimum change principle, and the Petz transpose map, thus justifying their broader applications in quantum information theory and possibly beyond.

\begin{figure}
    \centering
    \includegraphics[width=0.8\linewidth]{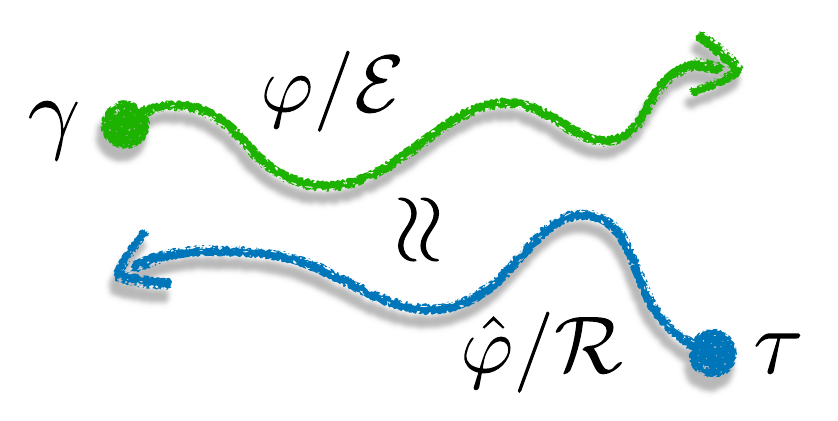}%
    \caption{Visualization of the optimization performed in this paper. Given the forward channel ($\varphi$ classical, $\mathcal{E}$ quantum) and the initial states $\gamma$ and $\tau$ of the forward and backward processes, one searches for the backward map ($\hat{\varphi}$ classical, $\mathcal{R}$ quantum) that minimizes the change or, equivalently, maximizes the similarity between the two processes. In the classical case, the processes are represented by joint probability distributions, and $\hat{\varphi}$ is known to be given by Bayes' rule for many quantitative definitions of similarity. By exploiting a suitable representation of quantum processes as bipartite quantum states, this paper presents the first result of this kind for quantum information, using the fidelity~\eqref{eq:fidelity} as the measure of similarity to maximize.}
    \label{fig:useless}
\end{figure}

\medskip{\em Classical Bayes' rule from the minimum change principle.}---Consider two random variables $X$ and $Y$ taking values in two finite alphabets, $\set{X}=\{x\}$ and $\set{Y}=\{y\}$, respectively. Assume that the initial belief about $X$ is represented by a distribution $\gamma(x)$, while the correlation between $X$ and $Y$ is given by a conditional probability distribution $\varphi(y|x)$. We can think of the latter as the ``forward process'' from $X$ to $Y$. Now, suppose we observe a certain value $Y=y_0$ and want to update our belief about $X$ in light of this new information. Bayes' rule provides the formula for the posterior probability as:
\begin{align}\label{eq:bayes}
    \hat\varphi_\gamma(x|y_0) :=  \frac{\varphi(y_0|x)\gamma(x)}{\sum_{x'\in\set{X}} \varphi(y_0|x')\gamma(x')}\;.
\end{align}

But how should the above be generalized to the situation where the new information does not come in the form of a definite value $y_0$ for $Y$, but as ``soft evidence,'' i.e., a probability distribution $\tau(y)$? As Jeffrey~\cite{jeffrey} argues, this is in fact the rule rather than the exception, since it is very rare that we can ascribe probability one to only one event $Y=y_0$, ruling out all others with absolute certainty. Jeffrey's probability kinematics~\cite{jeffrey} and Pearl's virtual evidence method~\cite{pearl} agree~\cite{CHAN200567} that the correct way to proceed is to promote Bayes' posterior to a full-fledged ``reverse process'' $\hat\varphi_\gamma(x|y)$ that yields the probabilities of $X$ given $Y$, so that the updated belief given the new evidence $\tau(y)$ becomes
\begin{align}\label{eq:jeffrey}
   \gamma'(x):= \sum_{y\in\set{Y}}\hat\varphi_\gamma(x|y)\tau(y)\;.
\end{align}
The conventional Bayes' rule~\eqref{eq:bayes} is clearly recovered as a special case of~\cref{eq:jeffrey} in which $\tau(y)$ is the delta function $\delta_{y,y_0}$. 

While Jeffrey's and Pearl's derivations of~\cref{eq:jeffrey} rely on certain natural assumptions about the logic model or underlying Bayesian network, an alternative way to obtain \cref{eq:jeffrey} is through the minimum change principle~\cite{may1976toward,aitchison1975goodness,williams1980bayesian,zellner1988optimal}. To use this principle, we consider the joint distribution of $X$ and $Y$ in the forward process, denoted as $P_{\rm fwd}(x,y):=\varphi(y|x)\gamma(x)$, and try to update it based on a subsequent observation of $Y$. The comparison will be done not only on the marginal distribution on $X$, but on the \textit{whole} joint distribution $P_{\rm fwd}(x,y)$. The minimum change principle then determines the updated belief about the joint distribution, denoted as $P_{\rm rev}(x,y)$.

Concretely, the minimum change principle can be described as the minimization of a divergence measure between the prior belief $P_{\rm fwd}$ and the updated belief $P_{\rm rev}$ under the constraint that $P_{\rm rev}$ is compatible with the new observation $\tau(y)$. This can be expressed as the following optimization problem:
\begin{align} \label{eq:clax_optimize}
\begin{split}
    \min_{P_{\rm rev}}\quad&   \mathbb{D}\left(P_{\rm fwd}, P_{\rm rev}\right) \\
    \text{subject to}\quad& \sum_{x\in\set{X}} P_{\rm rev}(x,y) = \tau(y)\;,\quad\forall y\in\set{Y}\;, \\
    \quad& P_{\rm rev}(x,y)\geq 0\;,\quad \forall x\in\set{X},~y\in\set{Y}\;,
\end{split}
\end{align}
where $\mathbb{D}\left(P_{\rm fwd}, P_{\rm rev}\right)$ is a statistical divergence measure between the prior and posterior joint distributions. Common choices include the Kullback--Leiber divergence \cite{aitchison1975goodness,zellner1988optimal}, other $f$-divergences %
including Pearson divergence and Hellinger distance \cite{liese2006divergences}, zero-one loss \cite{domingos1997optimality}, or the mean-square error of an estimation \cite{robert2007bayesian,berger2013statistical}. 
In all these cases, the optimal value is achieved by %
\begin{align}
    P_{\rm rev}(x,y)=\frac{P_{\rm fwd}(x,y)}{\sum_{x'} P_{\rm fwd}(x',y)} \tau(y) = \hat\varphi_\gamma(x|y) \tau(y)\;.
\end{align}
The above equation shows that the minimum change principle applied to the joint distributions of $X$ and $Y$ leads to the Bayes--Jeffrey rule in \cref{eq:bayes,eq:jeffrey}. 

To pave the way for the quantum case to be presented in what follows, we rephrase the optimization problem from~\cref{eq:clax_optimize} into an equivalent form as
\begin{align} \label{eq:clax_opt_gamma_tau}
\begin{split}
    \min_{\hat\varphi}\quad&  \mathbb{D}\Big(\varphi(y|x)\gamma(x) , \hat\varphi(x|y) \tau(y)\Big) \\
    \text{subject to}\quad& \sum_{x\in\set{X}} \hat\varphi(x|y) = 1\;,\quad\forall y\in\set{Y}\;,\\
    \quad& \hat\varphi(x|y)\geq 0\;,\quad \forall x\in\set{X},~y\in\set{Y}\;.
\end{split}
\end{align}
This form of optimization shifts our focus from $P_{\rm rev}$ to $\hat\varphi$, explicitly constraining $\hat\varphi$ to be a stochastic process. For the divergences mentioned above, the optimal value is still obtained with Bayes' rule~\cref{eq:bayes}. Since we are now interested in the reverse process $\hat\varphi$, the distribution $\tau$ can be seen in this context as a reference distribution of the variable $Y$ instead of the observation on $Y$. Interestingly, the solution provided by Bayes' rule is uniformly optimal regardless of the choice of $\tau$.

The goal of this work is to find a quantum analog of Bayes' rule, starting from an analog of the optimization problem~\eqref{eq:clax_optimize}. To do this, we first need to clarify the quantum equivalents of classical concepts such as stochastic processes, joint distributions, and statistical divergences.

\medskip{\em Representation of a quantum process.}---Consider two quantum systems $A$ and $B$, associated with two finite dimensional Hilbert spaces $\HS_A$ and $\HS_B$, respectively. Let $\mL_+(\HS)$ be the set of positive semidefinite operators on $\HS$ and $\mS(\HS)\subset\mL_+(\HS)$ be the subset of normalized states, i.e., density matrices or, equivalently, positive semidefinite operators $\rho\ge 0$ with unit trace. Note that strict positivity, i.e.~$\rho>0$, implies that the state $\rho$ is full-rank.

Stochastic processes in the quantum case correspond to completely positive trace-preserving (CPTP) linear maps, also known as quantum channels.
Following the classical Bayes' rule, we start with a forward quantum process $\map{E}$, i.e.\ a channel from system $A$ to system $B$. The goal is to determine a backward quantum process $\map{R}$, which is again a channel, but going from $B$ to $A$. Note that the adjoint map $\map{E}^\dag$, defined by trace duality as $\tr{\map{E}(X)\;Y}=:\tr{X\;\map{E}^\dag(Y)}$ for all operators $X$ and $Y$, is a completely positive linear map from $B$ to $A$, although it is generally not trace-preserving and thus not a channel. As in the classical case, we assume a prior belief about the input, but instead of a probability distribution, the prior belief is now given as a quantum state $\gamma\in \mS(\HS_A)$
\footnote{{There are also proposals of quantum Bayes' rule where the belief is described by objects other than a density matrix, such as a distribution over states \cite{guevara2015quantum}, or a joint state over the system of interest and a reference system \cite{liu2025state}. We leave their relation to the minimum change principle open for future studies.}}.

{Next, we look for a quantum analog of the joint input-output distribution of a stochastic process, such as the classical $P_{\rm fwd}(x,y) = \varphi(y|x)\gamma(x)$, which explicitly encodes correlations between inputs and outputs. In the classical case, the joint input-output distribution is defined directly by conditional probabilities, whereas in the quantum case, channels map input states to output states without providing a direct analog of a joint distribution.

To overcome this limitation and to represent the input-output correlations of a quantum channel, we adopt a purification-based construction. Specifically, we consider the canonical purification of $\gamma$~\cite{nielsen2010quantum,dutta2021canonical}
\begin{align}
    |\sqrt{\gamma}\rangle\!\rangle := \sum_{ij} \bra{i}\sqrt{\gamma}\ket{j} \ket{i}_{A_1}\ket{j}_{A_2}
\end{align}
where $A_1$ and $A_2$ are two copies of system $A$ and $\{\ket{i}\}$ is a chosen orthonormal basis of $\HS_A$. This pure state has the following marginal states
\begin{align}
    \Tr_{A_2}[|\sqrt{\gamma}\rangle\!\rangle\langle\!\langle\sqrt{\gamma}|]=\gamma\,, \qquad  
    \Tr_{A_1}[|\sqrt{\gamma}\rangle\!\rangle\langle\!\langle\sqrt{\gamma}|]=\gamma^T \,,
\end{align}
where the superscript notation $\bullet^T$ denotes the transposition taken with respect to the orthonormal basis $\{\ket{i}\}$.

Now, since we have two (correlated) copies of the input system, one could apply $\mathcal{E}$ to the first system and obtain $\mathcal{E}(\gamma)$, while the second system remains in state $\gamma^T$. Explicitly, this results in the state
\begin{align} \label{eq:apply_on_purification}
\begin{split}
    Q_{\rm fwd}&:=(\mathcal{E}\otimes\mathcal{I})\left(|\sqrt{\gamma}\rangle\!\rangle\langle\!\langle\sqrt{\gamma}|\right)\\
    &= \left(\openone_B \otimes \sqrt{\gamma^T}\right)C_{\map{E}}\left(\openone_B \otimes \sqrt{\gamma^T}\right),
\end{split}
\end{align}
where $\mathcal{I}$ is the identity map on system $A_2$, $\openone_B$ is the identity operator on system $B$ and $C_{\map{E}} = \sum_{i,j}\map{E}(\ketbra{i}{j}) \otimes \ketbra{i}{j} \in \mL_+(\HS_B\otimes\HS_A)$ is the Choi operator of the forward process $\map{E}$~\cite{choi1975completely}.
Computing the marginal operators we obtain
\begin{align} \label{eq:Q_F_marginals}
    \Tr_A[Q_{\rm fwd}] = \map{E}(\gamma)\;,\qquad    \Tr_B[Q_{\rm fwd}] = \gamma^T\;.
\end{align}

Such a representation of a quantum process has appeared in the literature  in a context comparing quantum processes to classical conditional probabilities~\cite{leifer2006quantum,CHRUSCINSKI2020115}. A benefit of this representation is that $Q_{\rm fwd}$ is always in $\mS(\HS_B\otimes\HS_A)$, i.e., it is a bipartite state for all choices of channel $\mE$ and prior state $\gamma$, and many divergence functions are well-defined and possess nice properties when evaluated on states.

{Although the operator $Q_{\rm fwd}$ is not a quantum state over time (the marginal on the input is not the input but its transpose; see Refs.~\cite{fullwood2022quantum, parzygnat2023time} for details), it is nonetheless very close in spirit to a state over time. In the fully commuting case, it reduces to the conventional joint input-output probability distribution $P_{\rm fwd}(x,y)$. For this reason, we will borrow the notation from Ref.~\cite{leifer2013towards} and define $\map{E}\star\rho$ as}

\begin{align}
    \map{E}\star\rho := \left(\openone_B \otimes \sqrt{\rho^T}\right)C_{\map{E}}\left(\openone_B \otimes \sqrt{\rho^T}\right),
\end{align}
so that our forward process operator becomes
\begin{align} \label{eq:Q_F}
    Q_{\rm fwd}= \map{E}\star\gamma\;.
\end{align}

}

For the reverse process, the representation is defined similarly, but with respect to a reference state $\tau\in S(\HS_B)$ on the output, and with an added transpose, in formula
\begin{align} \label{eq:Q_R}
    Q_{\rm rev}:= (\map{R}\star\tau)^T = (\sqrt{\tau}\otimes \openone_A)\;C_{\map{R}}^T\;(\sqrt{\tau}\otimes \openone_A)\;,
\end{align}
Note that the ordering of systems $A$ and $B$ are swapped so that $C_{\map{R}}=\sum_{k,l}\ketbra{k}{l}\otimes\map{R}(\ketbra{k}{l})\in\mL_+(\HS_B\otimes\HS_A)$ and $Q_{\rm rev}\in\mS(\HS_B\otimes\HS_A)$, thus matching \cref{eq:Q_F}. The same representation for the reverse processes has been used for a definition of observational entropy with general priors~\cite{bai2023observational}. In this way, the marginal states become
\begin{align}
    \Tr_A[Q_{\rm rev}] = \tau\;,\qquad    \Tr_B[Q_{\rm rev}] = \map{R}(\tau)^T\;.
\end{align}

\medskip{\em The quantum minimum change principle.}---We are now ready to formulate the quantum minimum change principle in a rigorous way. Here, we choose to measure ``change'' in terms of the (square-root) \textit{fidelity}, defined as  %
\cite{jozsa1994fidelity,uhlmann1976transition}
\begin{align} \label{eq:fidelity}
    F(\rho,\sigma):= \Tr\left[\sqrt{\sqrt{\rho}\,\sigma\sqrt{\rho}}\right] = \Tr\left[\sqrt{\sqrt{\sigma}\rho\sqrt{\sigma}}\right].
\end{align}
Fidelity is one of the most natural measures of the closeness between quantum states and has found countless applications in quantum information theory.
Here, we use the fidelity to measure the statistical similarity between the forward process, represented by the bipartite state $Q_{\rm fwd}$, and the reverse process, represented by $Q_{\rm rev}$.
Since both $Q_{\rm fwd}$ and $Q_{\rm rev}$ are well-defined density matrices, the fidelity between them $F(Q_{\rm fwd},Q_{\rm rev})$ is also well-defined, regardless of how $\mE$, $\gamma$, $\mR$, and $\tau$ are chosen. In particular, it is always bounded between 0 and 1, the latter being achieved if and only if $Q_{\rm fwd}=Q_{\rm rev}$.

The quantum minimum change principle, similar to \cref{eq:clax_opt_gamma_tau}, minimizes the deviation between $Q_{\rm fwd}=\map{E} \star \gamma$ and $Q_{\rm rev}=(\map{R}\star\tau )^T$, or equivalently, maximizes their fidelity. The principle is formally expressed as follows:
\begin{align} \label{eq:main_program}
\begin{split}
    \max_{\map{R}}\quad&  F\left( \map{E} \star \gamma , (\map{R}\star\tau )^T \right) \\
    \text{subject to}\quad& \map{R} \text{\rm~is CPTP}\;.
\end{split}
\end{align}
Our central result is to show that, given that $\map{E} \star \gamma$ and $\tau$ are both strictly positive (thus, so must be $\gamma$ and $\map{E}(\gamma)$ as well), the above program has a unique solution, for which we provide a closed-form expression. Remarkably, we find that whenever $\mE(\gamma)$ and $\tau$ commute, our solution coincides with the \textit{Petz transpose map}~\cite{petz1,petz} computed for the forward channel $\mE$ with respect to the prior $\gamma$, independent of $\tau$. This independence means that the Petz transpose map is uniformly optimal over all choices of $\tau$, similar to the behavior of the classical Bayes' rule mentioned earlier.

\begin{theorem} \label{thm:petz_optimal}
Given a forward CPTP map $\map{E}$, a reference input $\gamma$ and a reference output $\tau$, assuming both $\map{E}\star\gamma>0$ and $\tau>0$, the following CPTP map
    \begin{align} \label{eq:Petz_var}
        &\map{R}(\sigma) := \sqrt{\gamma}\;\map{E}^\dag(D \sigma D^\dag )\sqrt{\gamma} \\
        &D:=\sqrt{\tau}\left(\sqrt{\tau}\map{E}(\gamma)\sqrt{\tau}\right)^{-1/2}\label{eq:operator-D}
    \end{align}
    is the unique solution of the program \cref{eq:main_program}. Furthermore, if $[\tau,\map{E}(\gamma)]=0$, the above solution coincides with the Petz transpose map of $\mE$ with respect to $\gamma$, i.e.,
    \begin{align} \label{eq:Petz}
        \map{R}(\sigma) = \sqrt{\gamma}\;\map{E}^\dag\left(\map{E}(\gamma)^{-1/2}\sigma\map{E}(\gamma)^{-1/2}\right)\sqrt{\gamma} \,.
    \end{align}
\end{theorem}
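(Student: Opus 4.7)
The plan is to reduce the optimization over CPTP maps $\mR$ to a maximization of a trace inner product over isometries via Uhlmann's theorem, and then solve that by polar decomposition.

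Since $\tau>0$, the assignment $\mR\mapsto Q_{\rm rev}=(\sqrt{\tau}\otimes\openone_A)C_\mR^T(\sqrt{\tau}\otimes\openone_A)$ is a bijection between CPTP maps $\mR:B\to A$ and bipartite states $Q\geq 0$ on $\HS_B\otimes\HS_A$ satisfying $\Tr_A Q=\tau$, so the program~\eqref{eq:main_program} is equivalent to maximizing $F(Q_{\rm fwd},Q)$ over such $Q$. The data-processing inequality applied to $\Tr_A$ yields the upper bound $F(Q_{\rm fwd},Q)\leq F(\mE(\gamma),\tau)$, and to show it is saturated I would exhibit compatible purifications. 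Fixing Kraus operators $\{K_k\}$ of $\mE$, I set
\[
|\Psi\rangle:=\sum_k|K_k\sqrt{\gamma}\rangle\!\rangle\otimes|k\rangle_E\in\HS_B\otimes\HS_A\otimes\HS_E,
\]
which purifies both $Q_{\rm fwd}$ and $\mE(\gamma)$. Any purification of $\tau$ in the same ambient space takes the form $|\Phi_W\rangle=(\openone_B\otimes W)|\sqrt{\tau}\rangle\!\rangle$ for an isometry $W:\HS_B\to\HS_A\otimes\HS_E$ (enlarging $E$ if needed), and the partial trace $Q_W:=\Tr_E|\Phi_W\rangle\langle\Phi_W|$ then ranges over all admissible $Q_{\rm rev}$; so by Uhlmann's theorem the problem reduces to $\max_W|\langle\Psi|\Phi_W\rangle|$.

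A direct index calculation rewrites the overlap as $\langle\Psi|\Phi_W\rangle=\Tr[M^T W]$, where $M:=\sum_k(\sqrt{\gamma}K_k^\dagger\sqrt{\tau})\otimes|k\rangle_E$ is an operator from $\HS_B$ to $\HS_A\otimes\HS_E$. Its magnitude squared is $M^\dagger M=\sum_k\sqrt{\tau}K_k\gamma K_k^\dagger\sqrt{\tau}=\sqrt{\tau}\mE(\gamma)\sqrt{\tau}$, hence $\|M\|_1=\Tr\bigl[(\sqrt{\tau}\mE(\gamma)\sqrt{\tau})^{1/2}\bigr]=F(\mE(\gamma),\tau)$, matching the DPI bound. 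Trace-norm duality then gives $|\Tr[M^T W]|\leq\|M\|_1$ with equality uniquely attained by the isometry read off the polar decomposition $M=V|M|$; the hypotheses $\mE\star\gamma>0$ and $\tau>0$ ensure $|M|$ is invertible, hence $V$ is unique and so is the optimal $\mR$. Extracting the slices via $\sqrt{\gamma}K_k^\dagger\sqrt{\tau}=V_k|M|$ yields $V_k=\sqrt{\gamma}K_k^\dagger D$ with $D=\sqrt{\tau}(\sqrt{\tau}\mE(\gamma)\sqrt{\tau})^{-1/2}$ as in~\eqref{eq:operator-D}, and these are precisely the Kraus operators of the proposed $\mR$, so $\mR(\sigma)=\sum_k V_k\sigma V_k^\dagger=\sqrt{\gamma}\mE^\dagger(D\sigma D^\dagger)\sqrt{\gamma}$. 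The Petz specialization is then immediate: when $[\tau,\mE(\gamma)]=0$, $(\sqrt{\tau}\mE(\gamma)\sqrt{\tau})^{-1/2}$ reduces to $\tau^{-1/2}\mE(\gamma)^{-1/2}$, so $D=\mE(\gamma)^{-1/2}$ and~\eqref{eq:Petz_var} collapses to~\eqref{eq:Petz}.

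I expect the main obstacle to be the index bookkeeping in the third paragraph---both for identifying the overlap as $\Tr[M^T W]$ and for recognizing $|M|^2$ as $\sqrt{\tau}\mE(\gamma)\sqrt{\tau}$; once these identifications are in place, the rest of the argument is standard trace-norm variational machinery, and the uniqueness of the polar factor guarantees the uniqueness of $\mR$.
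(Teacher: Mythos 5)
Your proof is correct, but it takes a genuinely different route from the paper. The paper works directly with the Lagrangian of the program in the Choi representation: it computes the Fr\'echet derivative of the fidelity (Lemma~\ref{lem:diff_fidelity}), solves the stationarity conditions via the Riccati-type equation $\Delta_B\tau\Delta_B=\map{E}(\gamma)$ (Lemma~\ref{lem:XBX=A}), and then invokes strict concavity of $F(Q_{\rm fwd},\cdot)$ (Lemma~\ref{lem:fidelity_strictly_concave}) plus a continuity/closure argument to upgrade the full-rank local optimum to the unique global one. You instead observe that $\mR\mapsto Q_{\rm rev}$ is, for $\tau>0$, a bijection onto states with $B$-marginal $\tau$, bound the objective by $F(\map{E}(\gamma),\tau)$ via data processing under $\Tr_A$, and saturate the bound with Uhlmann's theorem, reducing everything to $\max_W|\Tr[M^TW]|$ over isometries, solved by the polar decomposition of $M$ (unique because $M^\dag M=\sqrt{\tau}\map{E}(\gamma)\sqrt{\tau}>0$). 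Your route buys three things the paper's does not: the optimal value in closed form, $F(\map{E}(\gamma),\tau)$; uniqueness over \emph{all} CPTP maps in one stroke, without restricting to full-rank Choi operators and taking limits; and no need for derivative or concavity machinery. The paper's variational route, on the other hand, generalizes directly to other divergences and to additional convex constraints on $\mR$ (as its Discussion anticipates), whereas the Uhlmann trick is tied to fidelity. The one place you should be careful is the step you already flag: since the overlap is $\Tr[M^TW]$ rather than $\Tr[M^\dag W]$, the maximizing isometry is $W=V^*$ (the entrywise conjugate of the polar factor $V=M|M|^{-1}$), not $V$ itself; this conjugation is then undone by the transpose built into $Q_{\rm rev}=(\mR\star\tau)^T$, which is why the Kraus operators of the optimal channel come out as $V_k=\sqrt{\gamma}K_k^\dag D$ exactly as you state. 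With that bookkeeping made explicit, the argument is complete.
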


{An example comparing the Petz transpose map and the optimal solution of \cref{eq:main_program} is shown in \cref{fig:curve}.}

\begin{figure}
    \centering
    \includegraphics[width=0.9\linewidth]{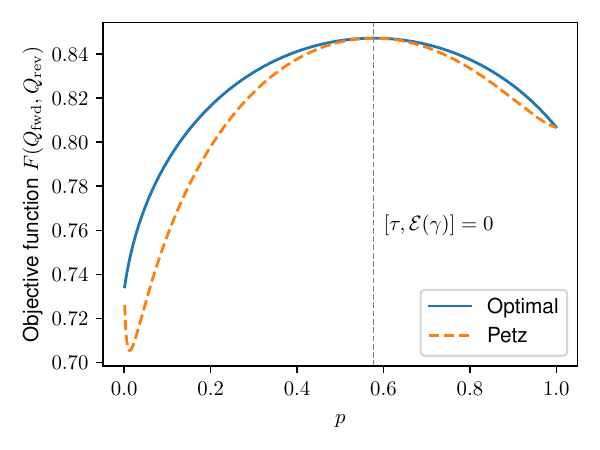}
    \caption{Example comparing the Petz transpose map with the optimal map obtained from the minimum change principle. The forward map $\mathcal{E}$ models one step of a thermalizing quantum machine where the system interacts with a thermal state $\xi$ \cite{scarani2002}. Concretely, we consider qubit systems, and the map is defined as $\mathcal{E}(\rho):=\Tr_{A'}[V(\rho\otimes\xi_{A'})V^\dag]$, where $V:=\cos\frac{\pi}{8}~ \openone + i\sin\frac{\pi}{8}~ U_{\rm sw}$ is the partial swap operator with angle $\frac{\pi}{8}$, $U_{\rm sw}$ is the SWAP gate, and $\xi=0.95\ketbra{0}{0} + 0.05\ketbra{1}{1}$. The prior belief is parameterized as $\gamma = (1-p)\ketbra{0}{0}+p\ketbra{1}{1}$, and the reference output state $\tau$ is chosen as $\tau=H\xi H$, where $H$ is the Hadamard gate. The figure plots the value of the objective function of \cref{eq:main_program} over $p\in[0.001,0.999]$, for $\mathcal{R}$ chosen to be the Petz transpose map [\cref{eq:Petz}, orange dashed line] or the optimal map [\cref{eq:Petz_var}, blue solid line]. In most cases $[\tau,\mathcal{E}(\gamma)]\neq 0$, and the Petz transpose map is suboptimal; while at $p = 1.85 - 0.9\sqrt{2} \approx 0.577$, where $\mathcal{E}(\gamma) = \openone/2$ and $[\tau, \mathcal{E}(\gamma)] = 0$, the Petz map achieves optimality.}
    \label{fig:curve}
\end{figure}

Before fleshing out the proof of the above theorem (the full details can be found in Supplemental Material \footnote{\label{supplemental}See Supplemental Material for proofs the main theorem and lemmas, which includes Refs. \cite{kubo1980means, nakamura2007geometric,stewart1990matrix,bhatia2018strong,del2018taylor}}), let us first make some comments. First, it is easy to verify that the map given in \cref{eq:Petz_var} is indeed completely positive and trace-preserving. Complete positivity holds because $\mR$ is the composition of three completely positive maps, that is, $D\bullet D^\dag$, $\mE^\dag(\bullet)$, and $\sqrt{\gamma}\bullet\sqrt{\gamma}$. The trace-preservation condition is $\Tr[\map{R}(\sigma)]=\Tr[D^\dag \map{E}(\gamma)D\sigma]=\Tr[\sigma]$ for any $\sigma$, equivalent to the condition
\[
\left(\sqrt{\tau}\map{E}(\gamma)\sqrt{\tau}\right)^{-1/2}\sqrt{\tau}\map{E}(\gamma)\sqrt{\tau}\left(\sqrt{\tau}\map{E}(\gamma)\sqrt{\tau}\right)^{-1/2}=\openone\;,
\]
which holds because it is equivalent to $A^{-1/2}AA^{-1/2}=\openone$, for $A>0$.

Second, similar to \cref{eq:clax_opt_gamma_tau}, the optimization contains $\tau$ as a parameter, but the solution \cref{eq:Petz} is uniformly optimal regardless of the choice of $\tau$, as long as $[\tau,\map{E}(\gamma)]=0$. The condition $[\tau,\map{E}(\gamma)]=0$, which also guarantees that our solution coincides with the Petz transpose map, is satisfied in some situations of physical interest, such as
\begin{enumerate}
    \item when $\tau=\map{E}(\gamma)$, as is often assumed when considering quantum error correction~\cite{barnum2002reversing} or quantum fluctuation relations~\cite{crooks-reversal}; notice that this case leads to the optimal solution with $F(Q_{\rm fwd},Q_{\rm rev})=1$, namely $Q_{\rm fwd}=Q_{\rm ref}$, indicating that, if the new information fully agrees with what was expected given the prior, we recover not only the initial state, but also the entire process, making the forward and reverse processes identical, as it happens in the classical case;
    \item when $\tau=\openone_B/d_B$, corresponding to ``uninformative'' new information;
    \item when $\mE(\gamma)=\openone_B/d_B$, corresponding to the case where the channel is maximally uninformative for the chosen prior; \item when $\mathcal{E}$ is a quantum-classical channel, e.g., a measurement channel, and the new information represented by $\tau$ is classical information about the measurement results.
\end{enumerate}
The last condition makes Theorem~\ref{thm:petz_optimal} especially compelling in the context of observational entropy and the corresponding second law~\cite{buscemi2022observational,bai2023observational,nagasawa2024observational}.

{Third, when the commutativity condition $[\tau, \map{E}(\gamma)] = 0$ is satisfied and the retrodictive channel does not depend on $\tau$, our solution is in line with Jeffrey’s framework of \textit{probability kinematics}~\cite{jeffrey}, where the update rule relies solely on the prior $\gamma$ and the forward map $\varphi$, independently of the new information $\tau$. However, when $\mE(\gamma)$ and $\tau$ do not commute, the quantum case departs from Jeffrey's framework, and the role of $\tau$ becomes interpretationally significant. If $\tau$ is taken as a subjective prior belief about what the observer expected to observe before obtaining the actual outcome, then its sole purpose is to ``guide'' the optimization according to the principle of minimum change. In this case, the update rule is given by \cref{eq:Petz_var} with $\sigma$ representing the newly acquired information, and the rule remains independent of $\tau$. In contrast, if $\tau$ is identified with the \textit{actual} observed evidence---namely, if one sets $\sigma = \tau$ in \cref{eq:Petz_var}---then the quantum counterpart of Jeffrey’s retrodicted distribution $\gamma'(x)$ from \cref{eq:jeffrey} becomes the quantum state}
\begin{align*}
&\gamma':=\map{R}(\tau)=\sqrt{\gamma}\map{E}^\dag(D \tau D^\dag)\sqrt{\gamma}\;.
\end{align*}
While the CPTP map $\map{R}$ is always linear, the \textit{overall} dependence of $\gamma'$ on $\tau$ is not, in contrast to the classical case.
Again, it is the commutativity condition $[\tau,\map{E}(\gamma)]=0$ that restores the linear dependence of $\gamma'$ on $\tau${, and thus in the classical case the two interpretations make no difference}. The functional dependence of the retrodicted state on the new information is an important aspect of the problem, which we leave open for future research, especially in view of the axiomatic approach proposed in~\cite{parzygnat2023axioms}, where linearity of the retrodiction is assumed from the start.

Fourth, when $[\tau,\map{E}(\gamma)]=0$, Theorem~\ref{thm:petz_optimal} can also deal with a non-invertible $\tau$: in this case, we can replace $\tau$ with $\tau_\epsilon:=(1-\epsilon)\tau+\epsilon u$, where $\epsilon>0$ and $u$ denotes the uniform (i.e.~maximally mixed) state; find the optimal reverse channel, which does not depend on $\tau_\epsilon$ as the latter preserves the commutation relation; and finally take the limit for $\epsilon\to 0$. In this way, we see that Theorem~\ref{thm:petz_optimal} does indeed contain the conventional Bayes' rule~\eqref{eq:bayes}, where the new information is given as a delta function.

Our proof is based on the method of Lagrangian multipliers. Here, we give only an outline and refer the interested reader to \cref{appendix} for the complete argument. First, write the program in terms of the Choi operator $C_{\map{R}}$ of $\map{R}$. Recalling that the reverse process goes from $B$ back to $A$, the trace-preserving constraint becomes $\Tr_A[C_{\map{R}}] - \openone_B=0$. 
Then, define the Lagrangian of the problem as
\begin{align}
    \mathcal{L}:=F\left( Q_{\rm fwd}, Q_{\rm rev} \right) + \Tr[\Lambda(\Tr_A[C_\mathcal{R}] - \openone_B)] \;,
\end{align}
where $\Lambda$ is the Lagrangian multiplier to enforce the trace-preserving constraint. Note that the complete positivity constraint is not explicitly invoked. The solution \cref{eq:Petz_var} corresponds to the Choi operator 
\begin{align} \label{eq:C_R}
    C_{\map{R}}=\left(D^T \otimes \sqrt{\gamma}\right) C_{\map{E}}^T \left(D^*\otimes \sqrt{\gamma}\right) \,,
\end{align}
where $D^*$ is the complex conjugate of the operator $D$ defined in \cref{eq:operator-D},  computed {with respect to the same basis as the transpose}. %
In \cref{appendix}, we show that \cref{eq:C_R} and the value of $\Lambda$ given by
\begin{align}
    \Lambda = -\frac12\left(\sqrt{\sqrt{\tau}\map{E}(\gamma)\sqrt{\tau}}\right)^T
\end{align}
satisfy the condition that the partial derivative of $\mathcal{L}$ is zero in every direction. Therefore, $C_{\map{R}}$ is a local optimum of the program. 
Finally, due to the strict concavity of fidelity under the theorem's hypotheses (proven as a separate lemma in \cref{appendix}), we conclude that $C_{\map{R}}$ is the unique global maximum.

\medskip{\em Discussion.---}In this work, we have generalized the minimum change principle to the quantum case, offering a new formulation of the quantum Bayes' rule. In particular, when fidelity is used as a figure of merit, the Petz transpose map, which is often only \textit{pretty good} but suboptimal \cite{hausladen1994pretty,zheng2024near}, has naturally emerged as the \textit{unique optimal} solution in many relevant cases, confirming the central role of the Petz transpose map as a quantum Bayes' rule. The consistency between the theory of statistical sufficiency (in which the Petz transpose map plays a central role) and the variational principle of minimum change suggests a wide range of applicability for that principle in all areas where the Petz transpose map has appeared to play a role, such as quantum information theory, quantum statistical mechanics, and many-body physics.

{
Besides fidelity, one may consider other divergences as the figure of merit, such as the Hilbert--Schmidt distance \cite{buvzek1996quantum} or the Umegaki \cite{umegaki,umegaki1962conditional} or Belavkin--Staszewski \cite{belavkin1982c,matsumoto2015new} quantum relative entropies, and wonder whether they lead to the same quantum Bayes' rule. In the special case of $\tau=\map{E}(\gamma)$, they always do, because the Petz transpose map produces $Q_{\rm fwd}=Q_{\rm ref}$ and any reasonable divergence should reach optimum for two identical operators. However, in the general case $\tau\neq\map{E}(\gamma)$, even if $[\tau,\map{E}(\gamma)]=0$, numerical optimization shows that the optimum points may differ for different divergences. It is thus interesting to explore whether different quantum Bayes' rules will arise from those various  choices.
}

Further, by imposing additional constraints to the program \cref{eq:main_program}, one could restrict the reverse process to a desired subset. We have solved the program \cref{eq:main_program} analytically, while for general cases, convex additional constraints maintains the convexity of the optimization program, for which efficient numerical algorithms can be adopted \cite{boyd2004convex,brahmachari2023fixed}.
With this approach, the minimum change principle may be extended to quantum combs \cite{chiribella2008quantum,chiribella2009theoretical}, quantum supermaps \cite{chiribella2008transforming,bai2024bayesian}, and quantum Bayesian networks \cite{tucci1995quantum,warmuth2005bayes,tucci2007factorization,leifer2008quantum,barrett2019quantum}, offering new belief update rules for them. The tools introduced in this work may also pave the way towards a fully quantum generalisation of entropy production and fluctuation theorems \cite{BS21,AwBS}.

\medskip\textit{Acknowledgments.}---The authors would like to thank Maarten Wegewijs for helpful comments. G.\ B.\ and V.\ S.\ acknowledge support from the National Research Foundation, Singapore and A*STAR under its CQT Bridging Grant; from the National Research Foundation, Singapore, through the National Quantum Office, hosted in A*STAR, under its Centre for Quantum Technologies Funding Initiative (S24Q2d0009); and from the Ministry of Education, Singapore, under the Tier 2 grant ``Bayesian approach to irreversibility'' (Grant No.~MOE-T2EP50123-0002).
F.\ B.\ acknowledges support from MEXT Quantum Leap Flagship Program (MEXT QLEAP) Grant No.~ JPMXS0120319794, from MEXT-JSPS  Grant-in-Aid for Transformative Research Areas  (A) ``Extreme Universe,'' No.~21H05183, and  from JSPS  KAKENHI Grant No.~23K03230.
G.\ B.\ acknowledges support from the Start-up Fund (Grant No.~G0101000274) from The Hong Kong University of Science and Technology (Guangzhou).

\bibliography{notes_modular}

\newpage
\onecolumngrid
\appendix
\newpage

\section{Proof of Theorem \ref{thm:petz_optimal}\label{appendix}}

We first give some lemmas useful for the proof.

\begin{lemma} \label{lem:XBX=A}
    For positive operator $A$ and full-rank positive operator $B$, the equation of $XBX=A$ has a unique positive solution $X= B^{-1/2}\sqrt{\sqrt{B}A\sqrt{B}}B^{-1/2} $.
\end{lemma}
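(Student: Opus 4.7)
My plan is to reduce the equation $XBX=A$ to a standard square-root problem by conjugating by $\sqrt{B}$. Concretely, I would first multiply $XBX=A$ on both sides by $\sqrt{B}$ to obtain
\begin{align*}
    \sqrt{B}\,X\,B\,X\,\sqrt{B} = \sqrt{B}\,A\,\sqrt{B}.
\end{align*}
The left-hand side factors as $(\sqrt{B}X\sqrt{B})(\sqrt{B}X\sqrt{B}) = (\sqrt{B}X\sqrt{B})^2$, since the two middle $\sqrt{B}$ factors multiply back to $B$. Thus any positive solution $X$ must satisfy
\begin{align*}
    (\sqrt{B}X\sqrt{B})^2 = \sqrt{B}\,A\,\sqrt{B}.
\end{align*}

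Next, I would observe that $\sqrt{B}X\sqrt{B}\ge 0$ whenever $X\ge 0$, and that the right-hand side $\sqrt{B}A\sqrt{B}$ is positive semidefinite. By uniqueness of the positive square root of a positive operator, this forces
\begin{align*}
    \sqrt{B}\,X\,\sqrt{B} = \sqrt{\sqrt{B}\,A\,\sqrt{B}}.
\end{align*}
Since $B>0$ by hypothesis, $\sqrt{B}$ is invertible, so conjugating by $B^{-1/2}$ gives uniqueness of the candidate $X=B^{-1/2}\sqrt{\sqrt{B}A\sqrt{B}}B^{-1/2}$.

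Finally, I would verify existence by plugging the candidate back into $XBX$: the inner product $B^{-1/2}\cdot B\cdot B^{-1/2}=\openone$ collapses the middle factor, and what remains is $B^{-1/2}(\sqrt{B}A\sqrt{B})B^{-1/2}=A$, confirming that the formula indeed solves the equation. Positivity of $X$ is automatic from the expression, being a symmetric sandwich of a positive operator. I do not anticipate any serious obstacle here; the only subtle point is recognizing that the equation $XBX=A$ becomes a square-root equation after conjugation by $\sqrt{B}$, after which uniqueness of the positive square root does all the work.
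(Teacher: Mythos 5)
Your proof is correct and follows essentially the same route as the paper: conjugating $XBX=A$ by $\sqrt{B}$ to reduce it to the square-root equation $(\sqrt{B}X\sqrt{B})^2=\sqrt{B}A\sqrt{B}$, then invoking uniqueness of the positive square root and invertibility of $\sqrt{B}$. Your explicit verification that the candidate solves the equation is a nice touch but adds nothing beyond the paper's observation that each step is reversible.
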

The equation a special case of the algebraic matrix Riccati equation, and the solution has been shown in \cite{kubo1980means, nakamura2007geometric}. We present a proof here for completeness.

\begin{proof}
    
\begin{align}
    XBX&=A \\
    \sqrt{B}X\sqrt{B}\sqrt{B}X\sqrt{B}&=\sqrt{B}A\sqrt{B} \\
    \sqrt{B}X\sqrt{B} &= \sqrt{\sqrt{B}A\sqrt{B}} \\
    X&= B^{-1/2}\sqrt{\sqrt{B}A\sqrt{B}}B^{-1/2} \label{eq:solutionXBX_A}
\end{align}
For positive operators $A$, $X$ and full-rank positive operator $B$, every step above is sufficient and necessary. Therefore, the solution of $X$ exists and is unique.
\end{proof}

\bigskip A corollary of this lemma is that, for full-rank positive operators $A$ and $B$, and for $X$ in \cref{lem:XBX=A}, $X^{-1}$ is a positive solution for the equation $X^{-1}AX^{-1}=B$. Then by \cref{lem:XBX=A}, $X^{-1}=A^{-1/2}\sqrt{\sqrt{A}B\sqrt{A}}A^{-1/2}$. Combining this with \cref{eq:solutionXBX_A}, one gets
\begin{align} \label{eq:BBABB=AABAA}
    B^{-1/2}\sqrt{\sqrt{B}A\sqrt{B}}B^{-1/2} = \sqrt{A}\left(\sqrt{A}B\sqrt{A}\right)^{-1/2}\sqrt{A}
\end{align}
for full-rank positive operators $A$ and $B$.

The lemma below computes the directional derivative, or Fr\'{e}chet derivative of the quantum fidelity. %

\begin{lemma} \label{lem:diff_fidelity}
    For full-rank positive operators $\rho$ and $\sigma$, the total differential of their fidelity  $F(\rho,\sigma):= \Tr\left[\sqrt{\sqrt{\sigma}\rho\sqrt{\sigma}}\right]$, with respect to directions $\partial\rho$ and $\partial\sigma$, is
    \begin{align}
        \partial F(\rho,\sigma) &= \frac12\Tr[\Delta \partial \sigma  ] + \frac12\Tr[\Delta^{-1} \partial \rho  ] 
    \end{align}
    where $\Delta:=\sigma^{-1/2}\sqrt{\sqrt{\sigma}\rho\sqrt{\sigma}}\sigma^{-1/2}$.
\end{lemma}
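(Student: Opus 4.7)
The plan is to write $F(\rho,\sigma) = \Tr[f(Y)]$ with $Y := \sqrt{\sigma}\,\rho\,\sqrt{\sigma}$ and $f(t)=\sqrt{t}$, and then invoke the standard spectral-calculus identity $\partial\Tr[f(Y)] = \Tr[f'(Y)\,\partial Y]$, which holds for any smooth $f$ and self-adjoint $Y$ without requiring $[\partial Y,Y]=0$ (the trace absorbs the non-commutativity). This yields $\partial F = \tfrac{1}{2}\Tr[Y^{-1/2}\,\partial Y]$, with $\partial Y = \partial(\sqrt{\sigma})\,\rho\sqrt{\sigma} + \sqrt{\sigma}\,\partial\rho\,\sqrt{\sigma} + \sqrt{\sigma}\,\rho\,\partial(\sqrt{\sigma})$ by the Leibniz rule. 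Full-rankness of $\rho$ and $\sigma$ ensures $Y>0$ and that all inverses are well defined.

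For the $\partial\rho$ contribution, cyclic invariance of the trace gives $\tfrac{1}{2}\Tr\bigl[\sqrt{\sigma}\,Y^{-1/2}\sqrt{\sigma}\,\partial\rho\bigr]$. Since $\Delta = \sigma^{-1/2}\sqrt{Y}\,\sigma^{-1/2}$ by definition, direct inversion gives $\Delta^{-1} = \sqrt{\sigma}\,Y^{-1/2}\sqrt{\sigma}$, which is also a special case of the identity in \eqref{eq:BBABB=AABAA}. Hence this piece equals $\tfrac{1}{2}\Tr[\Delta^{-1}\,\partial\rho]$, as required.

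For the two $\partial\sigma$ pieces, cyclic rearrangement groups them into $\tfrac{1}{2}\Tr\bigl[\partial(\sqrt{\sigma})\bigl(Y^{-1/2}\sqrt{\sigma}\,\rho + \rho\sqrt{\sigma}\,Y^{-1/2}\bigr)\bigr]$. Substituting $\rho = \sigma^{-1/2}Y\sigma^{-1/2}$ collapses the bracket to $\sqrt{Y}\,\sigma^{-1/2} + \sigma^{-1/2}\sqrt{Y} = \sqrt{\sigma}\,\Delta + \Delta\,\sqrt{\sigma}$. Reinjecting this and using cyclic invariance together with the Leibniz identity $\partial\sigma = \partial(\sqrt{\sigma})\sqrt{\sigma} + \sqrt{\sigma}\,\partial(\sqrt{\sigma})$, these two terms reassemble into the desired $\tfrac{1}{2}\Tr[\Delta\,\partial\sigma]$, completing the formula.

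The main subtlety is that $\partial(\sqrt{\sigma})$ does not commute with $\sqrt{\sigma}$, so a naive direct expansion via the Daleckii--Krein formula would obscure the final form and tempt one into a long integral representation. The key observation that eliminates this obstacle is the algebraic identity $Y^{-1/2}\sqrt{\sigma}\,\rho + \rho\sqrt{\sigma}\,Y^{-1/2} = \sqrt{\sigma}\,\Delta + \Delta\,\sqrt{\sigma}$; once this is recognized, trace-cyclic reshuffling alone reassembles the $\partial(\sqrt{\sigma})$-terms into $\partial\sigma$, and one never needs to expand $\partial(\sqrt{\sigma})$ explicitly.
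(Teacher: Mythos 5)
Your proof is correct, and it takes a genuinely different route from the paper's. The paper never differentiates the matrix square root at all: it first observes that $\Delta\sigma\Delta=\rho$ and $\Delta^{-1}\rho\Delta^{-1}=\sigma$, rewrites the fidelity symmetrically as $F=\tfrac12\left(\Tr[\sigma\Delta]+\Tr[\rho\Delta^{-1}]\right)$, and then differentiates this expression; the two cross terms involving $\partial\Delta$ combine into $\Tr[(\sigma-\Delta^{-1}\rho\Delta^{-1})\,\partial\Delta]$, which vanishes identically by the defining relation for $\Delta$ --- an envelope-style cancellation. You instead apply the spectral-calculus identity $\partial\Tr[f(Y)]=\Tr[f'(Y)\,\partial Y]$ to $Y=\sqrt{\sigma}\rho\sqrt{\sigma}$ and then reassemble the $\partial(\sqrt{\sigma})$ pieces by hand. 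Your key algebraic step checks out: substituting $\rho=\sigma^{-1/2}Y\sigma^{-1/2}$ gives $Y^{-1/2}\sqrt{\sigma}\rho+\rho\sqrt{\sigma}\,Y^{-1/2}=\sqrt{Y}\sigma^{-1/2}+\sigma^{-1/2}\sqrt{Y}=\sqrt{\sigma}\Delta+\Delta\sqrt{\sigma}$, after which cyclicity and $\partial\sigma=\partial(\sqrt{\sigma})\sqrt{\sigma}+\sqrt{\sigma}\,\partial(\sqrt{\sigma})$ yield $\tfrac12\Tr[\Delta\,\partial\sigma]$ exactly as claimed, and $\Delta^{-1}=\sqrt{\sigma}\,Y^{-1/2}\sqrt{\sigma}$ by direct inversion handles the $\partial\rho$ piece (this direct inversion is the cleaner justification; the appeal to \cref{eq:BBABB=AABAA} is not really needed). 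What each approach buys: the paper's symmetrization makes the $\rho\leftrightarrow\sigma$ symmetry of the answer manifest and sidesteps all non-commutativity issues in one stroke, at the cost of having to guess the symmetric rewriting; your version is a more systematic "differentiate what is written" computation whose only subtlety --- avoiding an explicit Daleckii--Krein expansion of $\partial(\sqrt{\sigma})$ --- you correctly identify and resolve. One small point worth stating explicitly if you write this up: the identity $\partial\Tr[f(Y)]=\Tr[f'(Y)\,\partial Y]$ for non-commuting perturbations is exactly the trace of the Daleckii--Krein formula, where only the diagonal (first-divided-difference) terms survive under the trace; it deserves a one-line justification or citation since it is the load-bearing step.
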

\begin{proof}

The operator $\Delta$ satisfies 
\begin{align}
    \Delta \sigma \Delta = \rho , ~\Delta^{-1}\rho\Delta^{-1} = \sigma \,.
\end{align}
Then one can rewrite the expression for fidelity as
\begin{align}
    F(\rho,\sigma) &= \Tr\left[\sqrt{\sqrt{\sigma}\rho\sqrt{\sigma}}\right] = \Tr[\sigma \Delta ] = \Tr[\rho \Delta^{-1}] \\
    & = \frac12\left(\Tr[\sigma \Delta ] + \Tr[\rho\Delta^{-1}]\right)\,.
\end{align}

The total differential of the fidelity is
\begin{align}
    \partial F(\rho,\sigma) 
    &= \frac12\left(\partial \Tr[\sigma \Delta ] + \partial\Tr[\rho\Delta^{-1}]\right) \\
    & =  \frac12\left( \Tr[\Delta \partial \sigma] + \Tr[\sigma \partial \Delta] + \Tr[\rho \partial\Delta^{-1}] + \Tr[\Delta^{-1} \partial\rho]\right) \\
    & =  \frac12\left( \Tr[\Delta \partial \sigma] + \Tr[\sigma \partial \Delta] - \Tr[\rho \Delta^{-1}\partial\Delta \cdot\Delta^{-1}] + \Tr[\Delta^{-1} \partial\rho]\right) \label{eq:use_diff_inv} \\
    & =  \frac12\left( \Tr[\Delta \partial \sigma] + \Tr[(\sigma -\Delta^{-1} \rho \Delta^{-1})\partial \Delta ] + \Tr[\Delta^{-1} \partial\rho]\right) \\
    & =  \frac12\Tr[\Delta \partial \sigma  ]  + \frac12\Tr[\Delta^{-1} \partial\rho]
\end{align}

In \cref{eq:use_diff_inv}, we have used the formula for the differential of the inverse of an matrix \cite{stewart1990matrix}:
\begin{align} \label{eq:diffinv}
    \partial\Delta^{-1} = -\Delta^{-1}\partial\Delta\cdot\Delta^{-1}
\end{align}
\end{proof}

\bigskip
\begin{proof}[Proof of \cref{thm:petz_optimal}.]
We first prove the local optimality of the solution given by \cref{eq:Petz_var} by deriving it from the Lagrangian multiplier method, and then use the strict concavity of fidelity (\cref{lem:fidelity_strictly_concave}) to show the global optimality and uniqueness.

We use the definition of $Q_{\rm fwd}$ and $Q_{\rm rev}$ in \cref{eq:Q_F,eq:Q_R}. 
Since we are first proving the existence of a local optimal solution of \cref{eq:main_program} in the form \cref{eq:Petz_var}, it suffices to restrict $C_{\map{R}}$ to be full-rank, and then $Q_{\rm rev}$ is full-rank. 
Recall that the variable we optimize over is $\map{R}$, or equivalently the transpose of its Choi operator $C_{\map{R}}^T$.

Define the Lagrangian as
\begin{align}
    \mathcal{L}:=F\left( Q_{\rm fwd}, Q_{\rm rev} \right) + \Tr[\Lambda(\Tr_A[C_\mathcal{R}] - \openone_B)]
\end{align}
where $\Lambda$ is the Lagrangian multiplier for the trace-preserving condition $\Tr_A[C_{\map{R}}] - \openone_B=0$.
The complete positivity constraint is not explicitly invoked. Defining
\begin{align}
    \Delta:=Q_{\rm rev}^{-1/2}\sqrt{\sqrt{Q_{\rm rev}}Q_{\rm fwd}\sqrt{Q_{\rm rev}}}Q_{\rm rev}^{-1/2}\,,
\end{align}
then $Q_{\rm rev} = \Delta^{-1}Q_{\rm fwd}\Delta^{-1}$ and by \cref{eq:Q_R}, $C_{\mathcal{R}}$ satisfies
\begin{align} \label{eq:C_R_as_Q_F}
    C_{\mathcal{R}}^T = (\tau^{-1/2} \otimes \openone_A)Q_{\rm rev}(\tau^{-1/2} \otimes \openone_A) = (\tau^{-1/2} \otimes \openone_A)\Delta^{-1}Q_{\rm fwd}\Delta^{-1}(\tau^{-1/2} \otimes \openone_A) \,.
\end{align}
For $Q_{\rm fwd}$ being a constant, by \cref{lem:diff_fidelity}, $\partial F(Q_{\rm fwd},Q_{\rm rev}) = \frac12\Tr[\Delta \partial Q_{\rm rev}  ] $. The total differential of $\mathcal{L}$ is
\begin{align}
    \partial\mathcal{L} &= \frac12\Tr\left[\Delta \partial Q_{\rm rev}  \right] +\Tr\left[(\Lambda \otimes \openone_A)\partial C_\mathcal{R}\right]+ \Tr\left[(\Tr_A[C_\mathcal{R}\right] - \openone_B)\partial\Lambda] \\
    & = \frac12\Tr\left[(\sqrt\tau \otimes \openone_A) \Delta (\sqrt\tau \otimes \openone_A) \partial C_{\mathcal{R}}^T  \right] +\Tr\left[(\Lambda^T \otimes \openone_A) \partial C_\mathcal{R}^T\right]+ \Tr\left[(\Tr_A[C_\mathcal{R}] - \openone_B)\partial\Lambda\right]\;.
\end{align}
In the second equality we also used the fact that $\tau$ is considered a constant. 

The local optimum is attained when the differential of $\mathcal{L}$ is zero with respect to both $C_{\mathcal{R}}^T$ and $\Lambda$. This gives
\begin{align}
    \frac12(\sqrt\tau \otimes \openone_A) \Delta (\sqrt\tau \otimes \openone_A) + \Lambda^T \otimes \openone_A &= 0 \label{eq:L=0_1}\\
    \Tr_A[C_\mathcal{R}] - \openone_B &= 0  \label{eq:L=0_2}
\end{align}
From \cref{eq:L=0_1}, one gets
\begin{align} \label{eq:Delta_as_D}
    \Delta = -2\tau^{-1/2} \Lambda^T \tau^{-1/2} \otimes \openone_A =: \Delta_B\otimes \openone_A
\end{align}
where $\Delta_B:=-2\tau^{-1/2} \Lambda^T \tau^{-1/2}$. $\Delta_B>0$ since $\Delta>0$. By \cref{eq:L=0_2}, one gets $\openone_B = \Tr_A[C_\mathcal{R}^T]$. Substituting \cref{eq:C_R_as_Q_F,eq:Delta_as_D} into this, one gets
\begin{align}
    \openone_B &= \Tr_A[C_\mathcal{R}^T] \\
    & = \Tr_A\left[(\tau^{-1/2} \otimes \openone_A)\Delta^{-1}Q_{\rm fwd}\Delta^{-1}(\tau^{-1/2} \otimes \openone_A)\right]\\
    & = \Tr_A\left[(\tau^{-1/2} \Delta_B^{-1} \otimes \openone_A)Q_{\rm fwd}(\Delta_B^{-1}\tau^{-1/2} \otimes \openone_A)\right]\\
    & = \tau^{-1/2} \Delta_B^{-1} \ \Tr_A\left[Q_{\rm fwd}\right]\ \Delta_B^{-1}\tau^{-1/2}\\
    & = \tau^{-1/2} \Delta_B^{-1}\ \map{E}(\gamma)\ \Delta_B^{-1}\tau^{-1/2}    \,,
\end{align}
where the last equation comes from \cref{eq:Q_F_marginals}. This can be equivalently written as a equation for $\Delta_B$, namely %
\begin{align}
    \Delta_B \tau \Delta_B =  \map{E}(\gamma) \,,
\end{align}
solving which with \cref{lem:XBX=A} obtains
\begin{align} \label{eq:Delta_B}
    \Delta_B = \tau^{-1/2}\sqrt{\sqrt{\tau}\map{E}(\gamma)\sqrt{\tau}}\tau^{-1/2}\;. 
\end{align}
In turn, this gives the value of the Lagrangian multiplier by
\begin{align}
    \Lambda^T = -\frac12\sqrt{\tau}\Delta_B\sqrt{\tau}= -\frac12\sqrt{\sqrt{\tau}\map{E}(\gamma)\sqrt{\tau}}.
\end{align}
Substituting \cref{eq:Delta_B} into  \cref{eq:Delta_as_D}, one gets
\begin{align}
    \Delta^{-1} &= %
    \Delta_B^{-1}\otimes\openone_A = 
    \sqrt{\tau}\left(\sqrt{\tau}\map{E}(\gamma)\sqrt{\tau}\right)^{-1/2}\sqrt{\tau} \otimes \openone_A \,.
\end{align}
Then, by \cref{eq:C_R_as_Q_F},
\begin{align}
    \mathrel{\phantom{=}}C_{\mathcal{R}}^T &= (\tau^{-1/2} \otimes \openone_A)\Delta^{-1}Q_{\rm fwd}\Delta^{-1}(\tau^{-1/2} \otimes \openone_A)\\
    & = \left(\left(\sqrt{\tau}\map{E}(\gamma)\sqrt{\tau}\right)^{-1/2}\sqrt{\tau} \otimes \openone_A\right) Q_{\rm fwd} \left(\sqrt{\tau}\left(\sqrt{\tau}\map{E}(\gamma)\sqrt{\tau}\right)^{-1/2}\otimes \openone_A\right)\\
    & = \left(\left(\sqrt{\tau}\map{E}(\gamma)\sqrt{\tau}\right)^{-1/2}\sqrt{\tau} \otimes \sqrt{\gamma^T}\right) C_{\map{E}} \left(\sqrt{\tau}\left(\sqrt{\tau}\map{E}(\gamma)\sqrt{\tau}\right)^{-1/2}\otimes \sqrt{\gamma^T}\right) \\
    & = \left(D^\dag \otimes \sqrt{\gamma^T}\right) C_{\map{E}} \left(D\otimes \sqrt{\gamma^T}\right) . \label{eq:CRT_as_CE}
\end{align}
where $D := \sqrt{\tau}\left(\sqrt{\tau}\map{E}(\gamma)\sqrt{\tau}\right)^{-1/2}$. The above Choi operator corresponds to the map in \cref{eq:Petz_var}. To see this, assuming the Kraus decomposition of $\map{E}$ is $\map{E}(\rho)=\sum_k K_k \rho K_k^\dag$, one computes $C_{\map{R}}^T$ from \cref{eq:Petz_var} by definition of the Choi operator $C_{\map{R}}=\sum_{ij}\ketbra{i}{j} \otimes \map{R}(\ketbra{i}{j})$ of the map $\map{R}(\sigma):=\sqrt{\gamma}\,\map{E}^\dag(D\sigma D^\dag)\sqrt{\gamma}$ as
\begin{align}
    \mathrel{\phantom{=}}C_{\mathcal{R}}^T &= \left(\sum_{ij}\ketbra{i}{j} \otimes \sqrt{\gamma}\,\map{E}^\dag \left(D\ketbra{i}{j} D^\dag\right) \sqrt{\gamma}\right)^T\\
    & = \sum_{ijk}\left(\ketbra{i}{j} \otimes \sqrt{\gamma}\, K_k^\dag D\ketbra{i}{j} D^\dag K_k \sqrt{\gamma}\right)^T\\
    & = \sum_{k}\left((\openone_B\otimes \sqrt{\gamma}) \left(\sum_{ij}\ketbra{i}{j} \otimes \ K_k^\dag D\ketbra{i}{j} D^\dag K_k \right) (\openone_B\otimes \sqrt{\gamma})\right)^T \,.
\end{align}
Using the identity $\sum_{ij}\ketbra{i}{j} \otimes X \ketbra{i}{j} Y = \sum_{ij} X^T\ketbra{i}{j} Y^T \otimes \ketbra{i}{j}$ for operators $X$ and $Y$  on $\HS_B$, one has 
\begin{align}
    C_{\mathcal{R}}^T& =  \sum_{k}\left((\openone_B\otimes \sqrt{\gamma}) \left(\sum_{ij} (K_k^\dag D)^T\ketbra{i}{j} (D^\dag K_k)^T \otimes \ketbra{i}{j} \right) (\openone_B\otimes \sqrt{\gamma})\right)^T\\ 
    & = \sum_{ijk}\left((K_k^\dag D)^T\ketbra{i}{j} (D^\dag K_k)^T \otimes \sqrt{\gamma} \ketbra{i}{j} \sqrt{\gamma}\right)^T \\
    & = \sum_{ijk}D^\dag K_k \ketbra{j}{i} K_k^\dag D \otimes \sqrt{\gamma^T} \ketbra{j}{i} \sqrt{\gamma^T} \\
    & = \sum_{ij} D^\dag \map{E}(\ketbra{j}{i}) D \otimes \sqrt{\gamma^T} \ketbra{j}{i} \sqrt{\gamma^T} \\
    & = \left(D^\dag \otimes \sqrt{\gamma^T}\right) C_{\map{E}} \left(D\otimes \sqrt{\gamma^T}\right),
\end{align}
which is the same as \cref{eq:CRT_as_CE}.

Therefore, $\map{R}$ in \cref{eq:Petz_var} satisfies $\partial\mathcal{L}=0$ and is a local optimum of \cref{eq:main_program}. %

Among full-rank positive operators $C_\mathcal{R}$, this solution is the unique global maximum since, as shown in \cref{lem:fidelity_strictly_concave} below, the objective function $F(Q_{\rm fwd},Q_{\rm rev})$ is strictly concave with respect to $C_\mathcal{R}$. More precisely, \cref{lem:fidelity_strictly_concave} below shows that $F(Q_{\rm fwd},Q_{\rm rev})$ is strictly concave when $Q_{\rm fwd}>0$ and $Q_{\rm rev}>0$, and $Q_{\rm rev}$ is a linear function of $C_\mathcal{R}$. Since the solution \cref{eq:CRT_as_CE} has $C_\mathcal{R}>0$, due to the continuity of quantum fidelity, this solution is also the unique global maximum among all positive operators including non-full-rank ones. That is because the set of positive operators is the closure of full-rank positive operators.

Last, if $[\tau,\map{E}(\gamma)] = 0$, one has $D=\map{E}(\gamma)^{-1/2}$, and the above equation simplifies to
\begin{align}
    C_{\mathcal{R}}^T %
    & = \left(\map{E}(\gamma)^{-1/2} \otimes \sqrt{\gamma^T}\right) C_{\map{E}} \left(\map{E}(\gamma)^{-1/2}\otimes \sqrt{\gamma^T}\right) . \label{eq:res_Petz_choi}
\end{align}
\cref{eq:res_Petz_choi} exactly corresponds to the Petz map \cref{eq:Petz}.
\end{proof}

\bigskip The next lemma shows the strict concavity of fidelity between two full-rank operators. We used it to show the uniqueness of the optimal solution in \cref{thm:petz_optimal}. A stronger version of this lemma is shown in~\cite{bhatia2018strong} but we present here an independent proof for the sake of completeness.

\begin{lemma}\label{lem:fidelity_strictly_concave}
    When $\rho>0$ and $\sigma>0$, the fidelity $F(\rho,\sigma)$ is strictly concave with respect to $\sigma$.
\end{lemma}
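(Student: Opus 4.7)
The plan is to establish strict concavity of $F(\rho,\sigma)$ in $\sigma$ via a variational characterization of the fidelity. I would first prove the identity
\begin{align*}
    2F(\rho,\sigma)=\min_{A>0}\Tr[\rho A+\sigma A^{-1}],
\end{align*}
where the minimum is attained at the \emph{unique} $A^{*}(\sigma):=\rho^{-1/2}(\sqrt{\rho}\sigma\sqrt{\rho})^{1/2}\rho^{-1/2}$. The critical-point condition reads $A\rho A=\sigma$, whose unique positive solution is $A^{*}(\sigma)$ by \cref{lem:XBX=A}; substituting back and using $A^{*}\rho A^{*}=\sigma$ gives the claimed value $2\Tr[\rho A^{*}]=2F(\rho,\sigma)$. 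Writing $F$ as a minimum over a family of functions that are \emph{linear} in $\sigma$ for each fixed $A$ already yields concavity; the task is to upgrade this to \emph{strict} concavity.

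The upgrade hinges on two observations. First, the objective $g_{A}(\sigma):=\Tr[\rho A+\sigma A^{-1}]$ is strictly convex in $A$ when $\sigma>0$: a direct differentiation yields the Hessian in direction $H$ as $2\Tr[\sigma A^{-1}HA^{-1}HA^{-1}]$, and after the substitutions $Y:=A^{-1/2}HA^{-1/2}$ and $M:=A^{-1/2}\sigma A^{-1/2}>0$ this reduces to $2\Tr[MY^{2}]=2\Tr[(Y\sqrt{M})^{\dagger}(Y\sqrt{M})]$, which is strictly positive unless $H=0$. Hence $A^{*}(\sigma)$ is the unique minimizer, and $g_{A}(\sigma)>2F(\rho,\sigma)$ for every $A\neq A^{*}(\sigma)$. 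Second, the map $\sigma\mapsto A^{*}(\sigma)$ is injective, because $\sigma=A^{*}(\sigma)\,\rho\,A^{*}(\sigma)$ inverts the formula.

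Putting these together, for positive $\sigma_{1}\neq\sigma_{2}$, $\lambda\in(0,1)$, and $\sigma_{0}:=\lambda\sigma_{1}+(1-\lambda)\sigma_{2}$, one has $\sigma_{0}\neq\sigma_{i}$ and hence, by injectivity, $A^{*}(\sigma_{0})\neq A^{*}(\sigma_{i})$ for $i=1,2$; strict minimality then gives $g_{A^{*}(\sigma_{0})}(\sigma_{i})>2F(\rho,\sigma_{i})$, and linearity of $g_{A}$ in $\sigma$ yields
\begin{align*}
    2F(\rho,\sigma_{0})=g_{A^{*}(\sigma_{0})}(\sigma_{0})=\lambda\,g_{A^{*}(\sigma_{0})}(\sigma_{1})+(1-\lambda)\,g_{A^{*}(\sigma_{0})}(\sigma_{2})>2\lambda F(\rho,\sigma_{1})+2(1-\lambda)F(\rho,\sigma_{2}),
\end{align*}
the desired strict concavity.

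The main obstacle is securing the strict convexity of $g_{A}(\sigma)$ in $A$: one must compute the matrix Hessian carefully and verify that its positive semidefiniteness is in fact strict, which relies essentially on the full-rank hypothesis $\sigma>0$. Once the variational identity and the uniqueness of its minimizer are in hand, strict concavity follows by the standard ``strict-minimum implies strict-concavity'' argument from convex analysis.
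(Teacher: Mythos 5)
Your argument is correct, and it takes a genuinely different route from the paper. The paper proves strict concavity by brute force: it computes the second directional derivative of $\varepsilon\mapsto F(\rho,\sigma+\varepsilon H)$ using the integral representation $\partial\sqrt{A}=\int_0^\infty e^{-t\sqrt{A}}\,\partial A\,e^{-t\sqrt{A}}\,dt$, obtains $-\tfrac12\int_0^\infty\Tr[K(t)HK(t)H]\,dt$ with $K(t)>0$, and shows the integrand is strictly positive for $H\neq0$. You instead invoke the variational characterization $2F(\rho,\sigma)=\min_{A>0}\Tr[\rho A+\sigma A^{-1}]$ (essentially Alberti's formula, and the route taken in the paper's own reference \cite{bhatia2018strong}), which exhibits $F$ as a pointwise minimum of functions affine in $\sigma$ --- so ordinary concavity is immediate --- and then upgrades to strictness via (i) strict convexity of $A\mapsto g_A(\sigma)$, hence uniqueness of the minimizer $A^*(\sigma)$, and (ii) injectivity of $\sigma\mapsto A^*(\sigma)$ from $\sigma=A^*\rho A^*$. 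All the individual steps check out: the critical-point equation $A\rho A=\sigma$ is solved uniquely by \cref{lem:XBX=A}, the Hessian computation $2\Tr[MY^2]>0$ for $H\neq 0$ is right and correctly uses $\sigma>0$, and an interior critical point of a strictly convex function on the open cone $\{A>0\}$ is automatically the unique global minimizer, so there is no issue with the infimum escaping to the boundary. Note also that your $A^*(\sigma)$ is exactly $\Delta^{-1}$ from \cref{lem:diff_fidelity}, so the two proofs are secretly manipulating the same object. What each approach buys: yours avoids matrix-calculus machinery (no Fr\'echet derivative of the square root, no integral representations) and makes the source of concavity conceptually transparent; the paper's direct computation produces the explicit first derivative $\tfrac12\Tr[\Delta\,\partial\sigma]$, which it needs anyway for the Lagrangian argument in the main theorem.
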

\begin{proof}
    Considering $F(\rho,\sigma)$ as a multivariate function of elements of $\sigma$, its strict concavity is equivalent to its second-order derivative being strictly negative in every direction.

    The domain of fidelity $F$ consists of positive Hermitian operators, and thus a perturbation of $\sigma$ should also be Hermitian. Let us consider a perturbation of $\sigma$ as $\sigma+\varepsilon H$, where $\varepsilon\in\mathbb{R}$ and $H\neq0$ is a Hermitian operator.

    Let $f(\varepsilon):=F(\rho,\sigma+\varepsilon H)$.
    The goal is to show that the second-order derivative of $f$ at $\varepsilon=0$ is negative in every direction $H\neq 0$.

    We start with the gradient of $F(\rho,\sigma)$ given by \cref{lem:diff_fidelity}:
    \begin{align}
        \partial F(\rho,\sigma) = \frac12 \Tr[\Delta \partial\sigma],%
    \end{align}
    where $\Delta=\sigma^{-1/2}\sqrt{\sqrt{\sigma}\rho\sqrt{\sigma}}\,\sigma^{-1/2}$. This gives the first-order derivative 
    \begin{align}\label{eq:df1}
        \left.\frac{df}{d\varepsilon}\right|_{\varepsilon=0}= \frac12 \Tr[\Delta H]\,.
    \end{align}

    Now, we compute the second-order derivative of $f$, which involves the differential of $\Delta$. Let $M:=\sqrt{\sqrt{\rho}\,\sigma\sqrt{\rho}}$, its differential is
    \begin{align} \label{eq:diffM}
        \partial M & = \partial\sqrt{\sqrt{\rho}\,\sigma\sqrt{\rho}} = \int_0^\infty e^{-tM} \sqrt{\rho}~\partial\sigma\,\sqrt{\rho} e^{-tM} dt \,,
    \end{align}
    where we have used \cite{del2018taylor}
    \begin{align}
        \partial \sqrt{A} = \int_0^\infty e^{-t\sqrt{A}} ~ \partial A ~ e^{-t\sqrt{A}} dt \,.
    \end{align}
    Since $\rho>0$ and $\sigma>0$, one has $M>0$.

    By \cref{eq:BBABB=AABAA}, $\Delta = \sqrt{\rho}\left(\sqrt{\rho}\,\sigma\sqrt{\rho}\right)^{-1/2}\sqrt{\rho} = \sqrt{\rho}M^{-1}\sqrt{\rho}$. Using \cref{eq:diffinv,eq:diffM}, its differential is
    \begin{align}
        \partial\Delta &= \sqrt{\rho} (\partial M^{-1}) \sqrt{\rho} \\
        &= -\sqrt{\rho}\,M^{-1}(\partial M)M^{-1}\sqrt{\rho}\\
        & = -\sqrt{\rho}\,M^{-1}\int_0^\infty e^{-tM} \sqrt{\rho}~\partial\sigma\,\sqrt{\rho}\, e^{-tM} dt \,M^{-1}\sqrt{\rho}\\
        & = -\int_0^\infty K(t)\, \partial\sigma\, K(t) dt\,,
    \end{align}
    where $K(t):=\sqrt{\rho}\,M^{-1}e^{-tM} \sqrt{\rho} = \sqrt{\rho}\,e^{-tM}M^{-1} \sqrt{\rho}$. $M>0$ implies $M^{-1}e^{-tM}>0$, and since $\rho>0$, one has $K(t)>0$ for any $t\geq0$. We then write the derivative of $\Delta$ with respect to $\varepsilon$ as
    \begin{align}
        \left.\frac{d\Delta}{d\varepsilon}\right|_{\varepsilon=0} = -\int_0^\infty K(t) H K(t) dt\,.
    \end{align}

    Combining this with \cref{eq:df1}, we get the second-order derivative of $f$
    \begin{align} \label{eq:df2}
        \left.\frac{d^2f}{d\varepsilon^2}\right|_{\varepsilon=0} = \frac12\Tr\left[\left.\frac{d\Delta}{d\varepsilon}\right|_{\varepsilon=0} H\right] = -\frac12\int_0^\infty\Tr[K(t) H K(t)H] dt \,.
    \end{align}
    
    To show that \cref{eq:df2} is negative, it suffices to show that $\Tr[K(t)HK(t)H]>0$ for any $t\geq 0$. We prove this by contradiction.

    Assume that the following holds for some $t$
    \begin{align} \label{eq:leqassumption}
        \Tr[K(t)HK(t)H]\leq 0 \,.
    \end{align}
    Take $X=HK(t)H = H^\dag K(t) H$, with the second equality due to $H$ being Hermitian. $X\geq0$ because $K(t)>0$. Since $K(t)>0$, if $X\geq 0$ and $X\neq 0$, one has $\Tr[K(t)X]>0$, contradicting the assumption. Therefore, $X=0$.

    Again, since $K(t)>0$, $X = H^\dag K(t) H = 0$ implies $H = 0$, in contradiction to $H\neq 0$ given at the beginning of this proof. Therefore, the assumption \cref{eq:leqassumption} is false and thus $\Tr[K(t)HK(t)H] >0$ for any $t\geq0$.

    We have obtained that $\left.\frac{d^2f}{d\varepsilon^2}\right|_{\varepsilon=0} < 0$ for any Hermitian operator $H\neq 0$. This proves that $F(\rho,\sigma)$ is strictly concave with respect to $\sigma$.
\end{proof}
\end{document}